\DeclarePairedDelimiter{\ceil}{\lceil}{\rceil}
\newcommand{\be}{\begin{equation}}
\newcommand{\ee}{\end{equation}}
\newcommand{\bse}{\begin{subequations}}
	\newcommand{\ese}{\end{subequations}}
\newcommand{\bewn}{\begin{equation*}}
\newcommand{\eewn}{\end{equation*}}
\newcommand{\bbmat}{\begin{bmatrix}} 
	\newcommand{\ebmat}{\end{bmatrix}}
\newcommand{\bd}{\begin{displaymath}}
\newcommand{\ed}{\end{displaymath}}
\newcommand{\bea}{\begin{eqnarray}}
\newcommand{\eea}{\end{eqnarray}}
\newcommand{\ba}{\begin{array}}
	\newcommand{\ea}{\end{array}}
\newcommand{\baa}{\begin{array}{ll}}
	\newcommand{\eaa}{\end{array}}
\newcommand{\bc}{\begin{center}}
	\newcommand{\ec}{\end{center}}
\newcommand{\ben}{\begin{enumerate}}
	\newcommand{\een}{\end{enumerate}}
\newcommand{\bi}{\begin{itemize}}
	\newcommand{\ei}{\end{itemize}}
\newcommand{\bt}{\begin{tabular}}
	\newcommand{\et}{\end{tabular}}
\newcommand{\bte}{\begin{table}}
	\newcommand{\ete}{\end{table}}
\newcommand{\norm}[1]{\left\lVert#1\right\rVert}   
\renewcommand\paragraph{\@startsection{paragraph}{4}{\z@}%
	{-2.5ex\@plus -1ex \@minus -.25ex}%
	{1.25ex \@plus .25ex}%
	{\normalfont\normalsize\bfseries}}
\newtheorem{remark}{Remark}
\newtheorem{theorem}{Theorem}
\newtheorem{lemma}[theorem]{\textbf{Lemma}}
\newtheorem{assumption}{\textbf{Assumption}}
\newtheorem{problem}{\textbf{Problem}}
\newtheorem{definition}{\textbf{Definition}}
\newcommand{\bR}{\mathbb{R}}
\newcommand{\calA}{\mathcal{A}}
\newcommand{\calB}{\mathcal{B}}
\newcommand{\calC}{\mathcal{C}}
\newcommand{\calD}{\mathcal{D}}
\newcommand{\calE}{\mathcal{E}}
\newcommand{\calF}{\mathcal{F}}
\newcommand{\calG}{\mathcal{G}}
\newcommand{\calH}{\mathcal{H}}
\newcommand{\calM}{\mathcal{M}}
\newcommand{\calN}{\mathcal{N}}
\newcommand{\calO}{\mathcal{O}}
\newcommand{\calP}{\mathcal{P}}
\newcommand{\calQ}{\mathcal{Q}}
\newcommand{\calS}{\mathcal{S}}
\newcommand{\calSE}{\mathcal{SE}}
\newcommand{\calV}{\mathcal{V}}
\newcommand{\calW}{\mathcal{W}}
\newcommand{\calZ}{\mathcal{Z}}
\newcommand{\pdydx}[2]{\frac{\partial{#1}}{\partial{#2}}}
\newcommand{\abs}[1]{\left |#1\right |}
\title{\LARGE \bf
	Herding an Adversarial Swarm in an Obstacle Environment
}
\author{Vishnu S. Chipade and Dimitra Panagou
	\thanks{The authors are with the Department of Aerospace Engineering,
		University of Michigan, Ann Arbor, MI, USA;
		{\tt\small (vishnuc,dpanagou)@umich.edu}}
	\thanks{This work has been funded by the Center for Unmanned Aircraft Systems (C-UAS), a National Science Foundation Industry/University Cooperative Research Center (I/UCRC) under NSF Award No. 1738714 along with significant contributions from C-UAS industry members.}
}
\begin{document}
	\maketitle
	\thispagestyle{empty}
	\pagestyle{empty}
	
	\begin{abstract}
		
		This paper studies a defense approach against a swarm of adversarial agents. We employ a closed formation (`StringNet') of defending agents around the adversarial agents to restrict their motion and guide them to a safe area while navigating in an obstacle-populated environment. Control laws for forming the StringNet and guiding it to a safe area are developed, and the stability of the closed-loop system is analyzed formally. The adversarial swarm is assumed to move as a flock in the presence of rectangular obstacles. Simulation results are provided to demonstrate the efficacy of the approach.   
		
		
	\end{abstract}
	
	\section{Introduction}
	Swarm technology has seen a rapid growth recently. Safety-critical infrastructure such as government facilities, airports, military bases are at increased risk of being attacked by swarms of adversarial agents (e.g., aerial robots). This creates a need for defending safety-critical infrastructure from attacks of adversarial swarms, particularly in crowded urban areas. 
	
	Counteracting an adversarial swarm by means of physical interception \cite{chen2017multiplayer, coon2017control} in an urban environment may not be desired due to human presence. Under the assumption of risk-averse and self-interested adversarial agents (attackers) that tend to move away from the defending agents (defenders) and from other dynamic objects, herding can be used as an indirect way of guiding the attackers to a safe area.  
	
	In this paper, we consider a problem of defending a safety-critical area (protected area) from an adversarial swarm. We address this as a problem of herding a swarm of attackers to a safe area, while avoiding the static rectangular obstacles of the urban environment.
	
	The herding approach to herd a flock of birds away from an airport in \cite{gade2015herding} uses an $n$-wavefront algorithm, where the motion of the birds on the boundary of the flock is influenced based on the locations of the airport and the safe area. Stability and performance guarantees under directed star communication graph are provided in \cite{gade2016robotic}, and experimental results in \cite{paranjape2018robotic}. In \cite{pierson2018controlling} a circular arc formation of herders is used to influence the nonlinear dynamics of the herd based on a potential-field approach. The authors design a point-offset controller to guide the herd close to a specified location. In \cite{haque2011biologically}, biologically-inspired strategies are developed for confining a group of mobile robots. The authors develop strategies based on the "wall" and "encirclement" methods that dolphins use to capture a school of fish. Regions from which this confinement is possible are also derived; however, the results are limited to constant velocity motions. 
	A similar approach of herding by caging is adopted in \cite{varava2017herding}, where a cage of high potential is formed around the sheep (attackers). An RRT approach is used to find a motion plan for the agents while maintaining the cage. However, the formation is assumed to have been already formed around the sheep. Furthermore, the caging of the sheep is only ensured with constant velocity motion under additional conservative assumptions on the distances between the agents. In general, most of these works lack a proper modeling of the adversarial agents' intent to reach or attack a certain protected area.
	
	In \cite{licitra2017single,licitra2018single} the authors discuss herding using a switched systems approach; the herder (defender) chases targets (attackers) sequentially by switching among them so that certain dwell-time conditions are satisfied to guarantee stability of the resulting trajectories. However, the assumption that only one of the targets is influenced by the herder at any time is conservative for the problem of defending against a swarm of attackers. 
	The authors in \cite{deptula2018single} use approximate dynamic programming to obtain near-optimal control policies for the herder to chase a target agent to a goal location. 
	
	
	The aforementioned approaches assume some form of potential field to model the repulsion of the attackers from the defenders, and develop herding strategies for the defenders based on this potential field. Hence, such approaches may fail to create a proper potential barrier around the attackers if the potential field of the attackers is unknown to defenders, or is modeled inaccurately. 
	In addition, most of the earlier work does not consider obstacles in the environment. In our prior work \cite{chipade2019herding}, we developed a strategy for herding a single attacker to a safe area in the presence of rectangular obstacles.
	
	In this paper, we propose what we call `StringNet Herding', in which a closed formation of physical strings called `StringNet' is formed by the defenders around the swarm of attackers. It is assumed that the string between two defenders serves as a barrier through which the attackers cannot escape. The StringNet is controlled collectively to herd the swarm of attackers. The proposed approach only assumes that the attackers avoid collisions with defenders and barriers, while the control actions of the attackers are not known a priori. To demonstrate the proposed approach, we use flocking behavior for the attackers, which however is not known to the defenders.	
	
	We build on earlier work \cite{reynolds1987flocks, murray2003flocking,olfati2004flocking,dai2014flocking, tanner2007flocking} to develop a flocking controller for the attackers in the presence of rectangular obstacles. The controller uses the $\beta$-agent strategy \cite{murray2003flocking, olfati2004flocking}, in which a virtual agent called $\beta$-agent is assumed to move on the boundary of the obstacle, 
	and the control action is designed to maintain a certain distance from this $\beta$-agent using a potential function approach. We generate $\beta$-agents along a superelliptic curve that is at least $\calC^1$ around the rectangular obstacles. Also, in contrast to earlier work \cite{pierson2018controlling, varava2017herding} that treats robots as point masses, we assume agents with known circular footprints. Furthermore, no constant velocity assumption is made about the attackers as is done in  \cite{haque2011biologically,varava2017herding}. 

	In summary, the novelties and the contributions are: (i) A `StringNet' formation to restrict the motion of the attackers inside the StringNet and to herd them towards a safe area. We develop provably-correct control laws for the defenders to form the StringNet in finite time, and to herd the entrapped attackers to safe area. (ii) The definition of $\beta$-agents along superelliptic contours around rectangular obstacles with $\calC^1$ velocity profile for obstacle avoidance in flocking.
	

	The rest of the paper is structured as follows: Section \ref{sec:math_model} describes the mathematical modeling and problem statement. The flocking and herding algorithms are discussed in Section \ref{sec:flocking} and \ref{sec:herding}, while simulations are provided in Section \ref{sec:simulations}. The conclusions and the ongoing work are discussed in Section \ref{sec:conclusions}.
	
	\section{Modeling and Problem Statement}\label{sec:math_model}
	\textit{Notation}: Vectors and matrices are denoted by small and capital bold letters, respectively (e.g., $\textbf{r}$, $\textbf{P}$). Script letters denote sets (e.g., $\calP$). $\norm{.}$ denotes Euclidean norm of its argument. $\abs{.}$ denotes absolute value of a scalar argument and cardinality if the argument is a set. The function $\textbf{sig}^{\alpha}$ is defined as: $\textbf{sig}^{\alpha}(\mathbf{x})=\mathbf{x}\norm{\mathbf{x}}^{\alpha-1}$. $\bR_{\ge 0}=\{x\in \bR| x \ge 0\}$. 	
	$R_{\imath}^{\jmath}=\norm{\mathbf{r}_{\imath}-\mathbf{r}_{\jmath}}$ and $E_{ok}^{\imath}$ are the Euclidean distance between object $\jmath$ and $\imath$, and the Super-elliptic distance between $\imath$ and $\calO_k$, respectively. 
	A blending function~\cite{panagou2014motion}, $\sigma_{\imath}^{\jmath}(\delta)$, characterized by a doublet $(\underline{\delta}_{\imath}^{\jmath}, \bar{\delta}_{\imath}^{\jmath})$ with $\underline{\delta}_{\imath}^{\jmath} < \bar{\delta}_{\imath}^{\jmath}$ is defined as: 
	\setlength{\abovedisplayskip}{3pt}
	\setlength{\belowdisplayskip}{3pt}
	\be \label{eq:blendFun}
	\sigma_{\imath}^{\jmath}(\delta) =  
	\begin{cases}
		1, & \delta \le \underline{\delta}_{\imath}^{\jmath};	\\[3pt]
		A_{\imath}^{\jmath}\delta^3+B_{\imath}^{\jmath} \delta^2+ C_{\imath}^{\jmath} \delta+D_{\imath}^{\jmath}, &   \underline{\delta}_{\imath}^{\jmath} \le \delta \le \bar{\delta}_{\imath}^{\jmath};	\\
		0, &	 \delta \ge \bar{\delta}_{\imath}^{\jmath};
	\end{cases}
	\ee
	where $\delta$ is the distance between the objects $\imath$ and $\jmath$.
	The coefficients $A_{\imath}^{\jmath},B_{\imath}^{\jmath},C_{\imath}^{\jmath},D_{\imath}^{\jmath}$ are chosen as: $A_{\imath}^{\jmath}=\frac{2}{(\bar{\delta}_{\imath}^{\jmath}-\underline{\delta}_{\imath}^{\jmath})^3}$, $B_{\imath}^{\jmath}=\frac{-3(\bar{\delta}_{\imath}^{\jmath}+\underline{\delta}_{\imath}^{\jmath})}{(\bar{\delta}_{\imath}^{\jmath}-\underline{\delta}_{\imath}^{\jmath})^3}$,  $C_{\imath}^{\jmath}=\frac{6\bar{\delta}_{\imath}^{\jmath}\underline{\delta}_{\imath}^{\jmath}}{(\bar{\delta}_{\imath}^{\jmath}-\underline{\delta}_{\imath}^{\jmath})^3}$, $D_{\imath}^{\jmath}=\frac{(\bar{\delta}_{\imath}^{\jmath})^2(\bar{\delta}_{\imath}^{\jmath}-3\underline{\delta}_{\imath}^{\jmath})}{(\bar{\delta}_{\imath}^{\jmath}-\underline{\delta}_{\imath}^{\jmath})^3}$, so that \eqref{eq:blendFun} is a $\calC^1$ function. The argument $\delta$ is either the Euclidean distance or the Super-elliptic distance, depending on the objects under consideration, and will be omitted when clear from the context.	

	We consider $N_a$ attackers $\calA_i$, $i \in I_a= \{1,2,...,N_a\}$ and $N_d$ defenders $\calD_j$, $j \in I_d= \{1,2,...,N_d\}$, operating in a 2D environment $\calW \subseteq \mathbb{R}^2$ with $N_o$ rectangular obstacles, a protected area $\calP \subset \calW$ defined as $\calP=\{\textbf{r} \in \bR^2 \;| \; \norm{\textbf{r}-\textbf{r}_p}\le \rho_p\}$, and a safe area $\calS \subset \calW$, defined as $\calS=\{\textbf{r} \in \bR^2 \; | \; \norm{\textbf{r}-\textbf{r}_{s}}\le \rho_{s}\}$, where $(\textbf r_p, \rho_p)$ and $(\textbf r_{s}, \rho_{s})$ are the centers and radii of the corresponding areas, respectively. The agents $\calA_i$ and $\calD_j$ are modeled as discs of radii $\rho_a$ and $\rho_d\le \rho_a$, respectively and have Double Integrator (DI) dynamics with a linear drag term: 
	\be \label{eq:attackDyn1}
	\baa
	\dot{\textbf{r}}_{ai}
	=\textbf{v}_{ai}, \quad \quad 
	\dot{\textbf{v}}_{ai}
	=\textbf{u}_{ai}-C_{d}\textbf{v}_{ai};
	\eaa
	\ee	
	\be\label{eq:defendDyn1}	
	\baa
	\dot{\textbf{r}}_{dj}
	=\textbf{v}_{dj}, \quad \quad 
	\dot{\textbf{v}}_{dj}
	=\textbf{u}_{dj}-C_{d}\textbf{v}_{dj};
	\eaa
	\ee	
	where $C_d$ is a drag coefficient, for $\imath=ai$ and $\imath=dj$
	$\textbf{r}_{\imath}=[x_{\imath}\; y_{\imath}]^T$, $\textbf{v}_{\imath}=[v_{x_{\imath}}\; v_{y_{\imath}}]^T$ are position and velocity of $\calA_i$ and $\calD_j$, respectively, and $\textbf{u}_{\imath}=[u_{x_{\imath}}\; u_{y_{\imath}}]^T$ is acceleration input (control input) of $\calA_i$ and $\calD_j$, respectively.	
	We assume that the control action of $\calA_i$ satisfies $\norm{\mathbf{u}_{ai}}<u_{ma}$.
	This model poses a realistic speed bound on each attacker with limited acceleration control, i.e., $v_{ai}=\norm{\mathbf{v}_{ai}}<v_{ma}=\frac{u_{ma}}{C_d}$. 
	We assume that every defender $\calD_j$ senses the position $\textbf r_{ai}$ and velocity $\textbf{v}_{ai}$ of the attacker $\calA_i$ when $\calA_i$ is inside a circular sensing-zone $\calZ_d^s=\{\mathbf r \in \mathbb{R}^2 |\; \norm{\textbf{r}-\textbf{r}_p} \le \rho_d^s\}$ around $\calP$. Each attacker $\calA_i$ has a similar local sensing zone $\calZ_{ai}^s=\{\textbf{r} \in \bR^2 \;| \; \norm{\textbf{r}-\textbf{r}_{ai}}\le \rho_{ai}^s \}$.
	
	We consider static obstacles $\calO_k$ of rectangular shape, with their edges along the $x$-axis ($\hat{\textbf{i}}$) and $y$-axis ($\hat{\textbf{j}}$) of  a coordinate frame $\calF_{gi}$,   
	defined as:
	\be
	\calO_k= \{\mathbf r \in \mathbb{R}^2 | \abs{x-x_{ok}} \le \frac{w_{ok}}{2}, \abs{y-y_{ok}} \le \frac{h_{ok}}{2}\} ,
	\ee
	where $\mathbf r_{ok}=[x_{ok} \; y_{ok}]^T$ is the center, $w_{ok}$ and $h_{ok}$ are the lengths along $\hat{\textbf{i}}$ and $\hat{\textbf{j}}$ of $\calO_k$ for all $ k \in I_o = \{1,2,...,N_o\}$.

	The attackers aim to reach the protected area $\calP$ as a flock, and the defenders aim to herd the flock to the safe area $\calS$ before it reaches $\calP$. Formally, we consider the following two problems.
	\begin{problem}[Flocking]
		Design control actions $\mathbf{u}_{ai}$, $\forall i \in I_a$ such that $\calA$'s reach $\calP$ as a flock formation while avoiding the static rectangular obstacles.
	\end{problem}
	\begin{problem}[Herding]
		Find control actions $\mathbf{u}_{dj}$, $\forall j \in I_d$ to accomplish: 1) StringNet formation around the swarm of attackers in finite time. 2) Once the StringNet is formed, move the StringNet to the safe area $\calS$ while avoiding the obstacles $\calO_k$. 
	\end{problem}

	\section{Flocking}\label{sec:flocking}
	The neighboring graph \cite{tanner2007flocking} for the attackers is denoted as $\calG_a=(\calV_a, \calE_a)$, where $\calV_a =\{\calA_1,\calA_2,...,\calA_{N_a}\}$ is the set of vertices and $\calE_a$ is the set of edges. Each attacker $\calA_i$ communicates with its neighbors $\calN_{ai}^{a} = \{i'\in \calV_a | (\calA_i,\calA_{i'}) \in \calE_a \}$.
	We define a potential function $V_{\imath}^{\jmath}:\bR_{\ge 0}\rightarrow \bR_{\ge 0}$ as:
	\be \noindent \label{eq:potential_function1}
	\arraycolsep=-3pt
	V_{\imath}^{\jmath}(R_{\imath}^{\jmath})=
	\ln \left(\frac{\tilde{R}_\imath^\jmath}{R_\imath^\jmath -\hat{R}_{\imath}^{\jmath}}+\frac{R_\imath^\jmath -\hat{R}_{\imath}^{\jmath}}{\tilde{R}_\imath^\jmath}\right),
	\ee
	where $\tilde{R}_\imath^\jmath>\hat{R}_{\imath}^{\jmath}$ is the desired distance between agent $\imath$ and agent $\jmath$, and $\hat{R}_{\imath}^{\jmath}$ is the minimum safety distance between agent $\imath$ and agent $\jmath$. We have that as $R_\imath^\jmath$ approaches $\hat{R}_{\imath}^{\jmath}$, the potential $V_{\imath}^{\jmath}$ tends to $\infty$.
	A control action corresponding to $V_\imath^\jmath$ is defined as:
	\be \label{eq:potential_function_control}
	\mathbf{u}_{p}(\mathbf{x}_\imath^\jmath)= -\zeta_{\imath}^{\jmath}   (\mathbf{v}_{\imath}-\mathbf{v}_{\jmath}) -\mu_{\imath}^{\jmath} \cdot \nabla_{\mathbf{r}_{\imath}} V_{\imath}^{\jmath}
	\ee
	where $\mathbf{x}_\imath^\jmath=[\mathbf{r}_\imath^T,  \mathbf{v}_\imath^T, \mathbf{r}_{\jmath}^T, \mathbf{v}_{\jmath}^T]^T$, $\zeta_{\imath}^{\jmath}$ and $\mu_\imath^\jmath$ are control gains. 

	The swarm of attackers aims to reach the protected area $\calP$ while avoiding the static obstacles $\calO_k$ and maintaining a flock described by potential functions $V_{ai}^{ai'}$ for all $i,i' \in I_a$ over the graph $\calG_a$. The control action for the flock of the attackers is defined as \cite{tanner2007flocking,deghat2016combined}:
	\setlength{\abovedisplayskip}{3pt}
	\setlength{\belowdisplayskip}{3pt}
	\be \label{eq:att_flock_control1}
	\arraycolsep=1.4pt
	\baa	
	\mathbf{u}_{ai}^f =& k_{a}^r(\mathbf{r}_{p}-\mathbf{r}_{ai})+
	\displaystyle \sum_{i'\in \calN_{ai}^a} \mathbf{u}_{p}(\mathbf{x}_{ai}^{ai'}) +\displaystyle \sum_{k\in\calN_{ai}^o} \sigma_{ai}^{ok} \cdot \mathbf{u}_{p}(\mathbf{x}_{ai}^{\beta ik})
	\eaa
	\ee
	where $\mathbf{x}_{ai}^{\beta ik}=[\mathbf{r}_{ai}^T, \mathbf{v}_{ai}^T,\mathbf{r}_{\beta ik}^T, \mathbf{v}_{\beta ik}^T]^T$, where $\mathbf{r}_{\beta ik}$ and $\mathbf{v}_{\beta ik}$ is the position and velocity of the $\beta$-agent on the boundary of $\calO_k$ corresponding to $\calA_i$ that is used for avoiding $\calO_k$. The blending function $\sigma_{ai}^{ok}$ allows smooth transition to the obstacle avoidance part of the controller, and is characterized by the doublet $(\underline{\xi}_{a}^{o},\bar{\xi}_{a}^{o})$. $\calN_{ai}^o$ is a set of neighboring obstacles defined as: $\calN_{ai}^o=\{k \in I_o| \sigma_{ai}^{ok}> 0\}$. The center $\mathbf{r}_p$ of the protected area $\calP$ acts as a $\gamma$-agent \cite{olfati2004flocking} providing navigational feedback. 
	
	
	\subsection{$\beta$-agents around Rectangular Obstacles} 
	
	\begin{wrapfigure}{l}{0.25\textwidth}
		\setlength{\belowcaptionskip}{-14pt plus 3pt minus 2pt}	
		\includegraphics[width=1\linewidth,trim={1.9cm 1cm 2.3cm 0cm},clip]{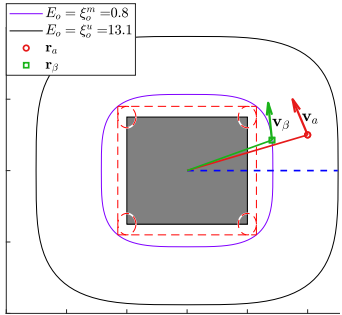}
		\caption{$\beta$-agent around rectangles for obstacle avoidance}
		\label{fig:pos_beta_agent}
	\end{wrapfigure}
	The position $\mathbf{r}_{\beta ik}$ $=[x_{\beta ik},\; y_{\beta ik}]^T$ of the agent $\beta_{ik}$ is defined as the projection of $\mathbf{r}_{ai}$ on the superelliptic contour $\calSE_{ok}$ of level $\xi_{ok}^{m}$ \cite{volpe1990manipulator}, defined as: $ 
	\calSE_{ok}= \left \{\mathbf{r} \in \bR^2 | E_{ok} =\xi_{ok}^{m} \right \}$ and the velocity $\mathbf{v}_{\beta ik}$ as the projection of $\mathbf{v}_{ai}$ along the tangent to the superelliptic contour $\calSE_{ok}$ at $\mathbf{r}_{\beta i}$ along the direction of motion of $\calA_i$.

	\noindent The superelliptic distance $E_{ok}$ is defined as:
	\be \label{eq:superEllipticDist}
	E_{ok}= \abs{\frac{x-x_{ok}}{a_{ok}}}^{2n_{ok}} + \abs{\frac{y-y_{ok}}{b_{ok}}}^{2n_{ok}} -1.
	\ee
	The projection $\mathbf{r}_{\beta ik}$ of $\mathbf{r}_{ai}$ on the $\calSE_{ok}$ is the closest point on $\calSE_{ok}$ such that the unit tangent $\hat{\mathbf{t}}_{ok}(\mathbf{r}_{\beta ik})$ to $\calSE_{ok}$ at $\mathbf{r}_{\beta ik}$ is normal to $\mathbf{r}_{ai}-\mathbf{r}_{\beta ik}$, and is found by solving:
	\begin{subequations} \label{eq:pos_beta_agent}
		\be
		\abs{\frac{x_{\beta ik}-x_{ok}}{a_{ok}}}^{2n_{ok}} + \abs{\frac{y_{\beta ik}-y_{ok}}{b_{ok}}}^{2n_{ok}} -1 =\xi_{ok}^{m},
		\ee 
		\be -\frac{b^{2n_{ok}}sig^{2n_{ok}-2}(x_{\beta ik}-x_{ok})}{a^{2n_{ok}}sig^{2n_{ok}-2}(y_{\beta ik}-y_{ok})}\cdot\frac{y_{\beta ik}-y_{ai}}{x_{\beta ik}-x_{ai}} =-1. 
		\ee
	\end{subequations}
	where $sig^m(x)=x\abs{x}^m$. Fig. \ref{fig:pos_beta_agent} shows the projection $\mathbf{r}_{\beta ik}$ (green square) of $\mathbf{r}_{ai}$ (red circle). The velocity $\mathbf{v}_{\beta ik}$ can be then obtained as: $
	\mathbf{v}_{\beta ik}=\left ( \mathbf{v}_{ai}\cdot\hat{\mathbf{t}}_{ok}(\mathbf{r}_{\beta ik})\right) \hat{\mathbf{t}}_{ok}(\mathbf{r}_{\beta ik})
	$.	

	\subsection{Avoiding Dynamic Obstacles during Flocking}
	\subsubsection{Avoiding the Defenders}
	In addition to avoiding static obstacles, the attackers apply the following control action to avoid the defenders:
	\be \label{eq:att_flock_control2}
	\baa
	\mathbf{u}_{ai}^d =\displaystyle \sum_{j\in \calN_{ai}^{d}} \sigma_{ai}^{dj} \cdot \mathbf{u}_{p}(\mathbf{x}_{ai}^{dj}),
	\eaa
	\ee
	where $\sigma_{ai}^{dj}$ is characterized by the doublet $(\underline{R}_{a}^{d},\bar{R}_{a}^{d})$. $\calN_{ai}^d$ is a set of defenders in the sensing zone of $\calA_i$ defined as: $\calN_{ai}^d=\{j \in I_d| R_{ai}^{dj}< \rho_{ai}^s\}$, and $\tilde{R}_{ai}^{dj}>\bar{R}_{a}^{d}$.
	
	\subsubsection{Avoiding the Strings}
	The strings (string barriers) are line segments between defenders. The attackers can sense these strings in their sensing zone and react to them using the control action:
	\be \label{eq:att_flock_control3}
	\baa
	\mathbf{u}_{ai}^{b} =\displaystyle \sum_{s\in \calN_{ai}^{b}} \sigma_{ai}^{bs} \cdot \mathbf{u}_{p}(\mathbf{x}_{ai}^{bs}),
	\eaa
	\ee
	where $\mathbf{u}_{p}$ is given by \eqref{eq:potential_function_control} and $V_{ai}^{bs}$ is a potential function for $\calA_i$ corresponding to its projection ($\mathbf{r}_{bs}$, $\mathbf{v}_{bs}$) on the string barrier $\calB_{s}$ (Fig.~\ref{fig:stringNetDesPos}). $\sigma_{ai}^{bs}$ and $\calN_{ai}^{b}$ are defined similar to $\sigma_{ai}^{dj}$ and $\calN_{ai}^{d}$.
	The combined bounded control action for the attackers' flock is given as:
	\be \label{eq:att_flock_control}
	\baa
	\mathbf{u}_{ai} =& \bm{\sigma}_a \left (\mathbf{u}_{ai}^f + \mathbf{u}_{ai}^d + \mathbf{u}_{ai}^b  +C_d\mathbf{v}_{ai} \right),
	\eaa
	\ee
	where saturation function $\bm{\sigma}_a(\mathbf{u})=\min(u_{ma},\norm{\mathbf{u}})\frac{\mathbf{u}}{\norm{\mathbf{u}}}$.
	\begin{remark}
		The convergence analysis for flocking of the attackers under the control \eqref{eq:att_flock_control1} is provided in \cite{olfati2004flocking,murray2003flocking,tanner2007flocking} when the first term is absent i.e. no navigational control command. Similar analysis can be done for the flock's convergence to $\mathbf{r}_p$. Since flocking is not the focus of this paper we omit the analysis in the interest of space.
	\end{remark}

	\section{Herding}\label{sec:herding}
	To herd the flock of attackers to $\calS$, we propose `StringNet Herding'. StringNet is a closed net of strings formed by the defenders as shown in Fig. \ref{fig:stringNetDesPos}. The strings can be actual physical strings (ropes) or some mechanism that does not allow the attackers to pass through them. It is assumed that even after being connected by the strings, the motion of defenders is not restricted. The underlying graph structure for the `StringNet' is defined as:
	\begin{definition}[StringNet] The StringNet $\calG^{s}= (\calV^s,$ $\calE^s)$ is a cycle graph consisting of: 1) the defenders as the vertices, $\calV^s=\{\calD_1,\calD_2,...,\calD_{N_d}\}$, 2) a set of edges, $\calE^s=\{(\calD_j,\calD_{j'}) \in \calV^s \times \calV^s | \calD_j \overset{s} \longleftrightarrow \calD_{j'} \}$. The operator $\overset{s} \longleftrightarrow$ denotes a physical string barrier between the defenders.
	\end{definition}
	
	The StringNet herding consists of three phases: 1) Gathering, 2) StringNet formation and 3) Herding the StringNet to $\calS$. These phases are discussed as follows.
	
	\subsection{Gathering}
	Once the adversarial attackers are sensed in the sensing zone $\calZ_{d}^s$, the defenders are tasked to herd them. The defenders first converge to an open semicircular formation in the expected path of the attackers (shortest path for the attackers) and establish strings such that $\calA_i$ is connected to $\calA_{i+1}$ by a string for all $i=\{1,2,...,N_d-1\}$ (Fig.~\ref{fig:stringNetDesPos}). The desired position $\bm{\xi}_{dj}^g$ of $\calD_j$ on the stationary semicircular formation $\mathscr{F}_d^g$ (Fig.~\ref{fig:stringNetDesPos}) is designed as:
	\be 
	\arraycolsep=1.4pt
	\baa
	\bm{\xi}_{dj}^g=\mathbf{r}_{df}^g + \rho_{df}^s \hat{\mathbf{o}} (\theta_{dj}) \text{, where }
	\theta_{dj} = \theta_{df}^{g*}+\frac{\pi(j-1)}{N_d-1},
	\eaa
	\ee
	where $\hat{\mathbf{o}}(\theta)=\bbmat \cos(\theta) \\ \sin(\theta) \ebmat$ is the unit vector making an angle $\theta$ with $x$-axis, $\mathbf{r}_{df}^g = \rho_{df}^g \hat{\mathbf{o}}(\theta_{ac}^*) $ is a location such that $\rho_{df}^g> \rho_p + d_{ac}^{max}$, where $d_{ac}^{max}$ is the maximum distance attacker's center of mass (ACoM, $\mathbf{r}_{ac} =\frac{1}{N_a}\sum_{i}^{N_a}\mathbf{r}_{ai}$) can travel towards $\calP$ during the StringNet formation phase, discussed next, and $\theta_{df}^{g*}=\theta_{ac}^{*}-\frac{\pi}{2}$, where $\theta_{ac}^*$ is the expected direction of motion of the ACoM  on the shortest path from the initial position of ACoM to $\calP$. We have $\dot{\bm{\xi}}_{dj}^g=	\bm{\eta}_{dj}^g=\mathbf{0}$ and $\dot{\bm{\eta}}_{dj}^g=\mathbf{0}$. We assume the following.
	\begin{assumption}\label{assum:gathering_phase_params}
		(a) The desired position of $\calD_j$, $\bm{\xi}_{dj}^g$, is such that $E_{ok}^{dj,des}>\bar{\xi}_{d}^o, \forall j \in I_d, \forall k \in I_o$, where $E_{ok}^{dj,des}$ is super-elliptic distance between $\bm{\xi}_{dj}^g$ and the obstacle $\calO_k$. (b)
		$\rho_{df}^s\sqrt{2-2\cos \left(\frac{\pi}{N_d-1}\right)} > \bar{R}_{d}^{d}$, $\rho_{df}^s> \rho_{ac}+ \bar{R}_{d}^{\delta c}$ where $\bar{R}_{d}^{d}$ and $\bar{R}_{d}^{\delta c}$ are the parameters of the blending functions $\sigma_{dj}^{dj'}$ and $\sigma_{dj}^{\delta cj}$ respectively.
	\end{assumption}
	
	To converge to $\bm{\xi}_{dj}^g$, a finite-time stabilizing controller is defined as:
	\be \label{eq:def_control1} 
	\mathbf{u}_{dj}=\mathbf{u}_{dj}^0 +\mathbf{u}_{dj}^{col} + \dot{\bm{\eta}}_{dj}^g\vspace{-5pt},
	\ee
	where
	\bewn
	\mathbf{u}_{dj}^0= C_d \mathbf{v}_{dj} - k_2\textbf{sig}^{\alpha_2}(\mathbf{v}_{dj}-\bm{\eta}_{dj}^g)- k_1\textbf{sig}^{\alpha_1}(\mathbf{r}_{dj}-\bm{\xi}_{dj}^g)
	\eewn \vspace{-10pt}	 
	\be \label{eq:def_collision_avoid_control1}	 
	\mathbf{u}_{dj}^{col} = \displaystyle \sum_{j'\in \calN_{dj}^{d}} \sigma_{dj}^{dj'} \cdot \mathbf{u}_{p}(\mathbf{x}_{dj}^{dj'})+\displaystyle \sum_{k \in \calN_{dj}^o}^{} \sigma_{dj}^{\delta jk} \cdot \mathbf{u}_{p}(\mathbf{x}_{dj}^{\delta jk}) \vspace{-10pt}
	\ee
	where $k_1, k_2 >0$. $\mathbf{x}_{dj}^{\delta jk}=[\mathbf{r}_{dj}^T,\mathbf{v}_{dj}^T,\mathbf{r}_{\delta jk}^T,\mathbf{v}_{\delta jk}^T]$, where $\mathbf{r}_{\delta jk}$ and $\mathbf{v}_{\delta jk}$ are the position and the velocity of a virtual $\delta$-agent, similar to $\beta$-agent, corresponding to $\calD_j$ around the obstacle $\calO_k$. $V_{dj}^{dj'}$, $V_{dj}^{\delta jk}$ are potential functions to avoid collision, respectively, with $\calD_{j'}$ and $\delta$-agent on the boundary of $\calO_k$. We have $\tilde{R}_{dj}^{dj'}>\bar{R}_{d}^{d}$ and $\tilde{R}_{dj}^{\delta jk}>\bar{R}_{d}^{d}$ to ensure collision avoidance for $\calD_j$.

	\subsection{StringNet Formation}
	The attackers are assumed to stay within a connectivity region of radius $\rho_{ac}$ ($<\rho_{sn}^{max} $) around ACoM . Once the semicircular formation is in place, the defenders wait until attackers come close, i.e., $\norm{\mathbf{r}_{df}^g-\mathbf{r}_{ac}}<\epsilon$, where $\epsilon$ is a small number. 
	To trap the attackers inside StringNet, a desired regular-polygon formation is designed around the connectivity region of the attackers as shown in Fig.~\ref{fig:stringNetDesPos}. The defenders start tracking their desired positions around the attackers and once $\calD_1$ and $\calD_{N_a}$ reach within $b_d$ distance from their respective desired positions they get connected via a string. The desired position $\bm{\xi}_{dj}^s$ of $\calD_j$ on the StringNet $\calG^s$ (Fig.~\ref{fig:stringNetDesPos}) is chosen on the circle with radius $\rho_{sn}$ centered at $\mathbf{r}_{ac}$ as:
	\be 
	\arraycolsep=1.4pt
	\baa
	\bm{\xi}_{dj}^s=\mathbf{r}_{ac} + \rho_{sn} \hat{\mathbf{o}}(\theta_{dj}) \text{, where }
	\theta_{dj} = \theta_{df}^{s*}+\frac{\pi(2j-1)}{N_d},
	\eaa
	\ee
	for all $j \in I_d$, where $\theta_{df}^{s*}=\theta_{df}^{g*}$. 
	The radius $\rho_{sn}$ should satisfy, $\rho_{ac} +b_d < \rho_{sn}\le \rho_{sn}^{max}-b_d$, where $\rho_{sn}^{max}$ is the maximum footprint of a formation that can pass through the obstacle-free space in the environment. The parameter $b_d$ is the maximum position tracking error when the defenders converge to the StringNet formation as obtained in Theorem~\ref{thm:stringNet_formation}. We have $\dot{\bm{\xi}}_{dj}^s=\bm{\eta}_{dj}^s=\dot{\mathbf{r}}_{ac} =\mathbf{v}_{ac}$.
	\begin{figure}[h]
		\centering
		\includegraphics[width=.9\linewidth,trim={3.2cm 3.7cm 2.5cm 3cm},clip]{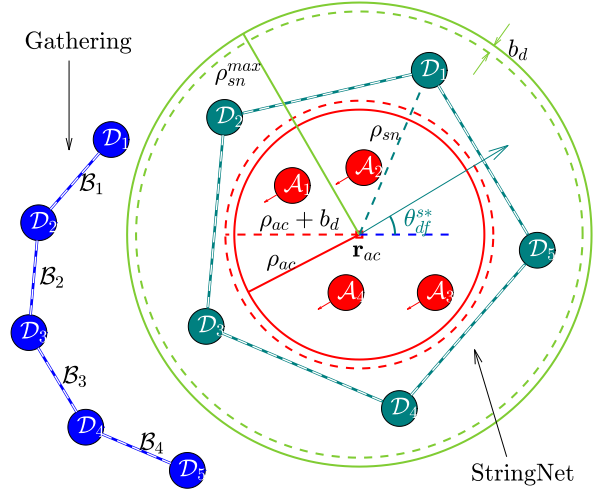}
		\caption{Desired Positions of the Defenders}
		\label{fig:stringNetDesPos}
	\end{figure}	
	The control action for $\calD_j$ during this phase is:
	\be\label{eq:def_control2}
	\arraycolsep=1.4pt
	\baa
	\mathbf{u}_{dj}=& C_d\mathbf{v}_{dj}- k_2\cdot(\mathbf{v}_{dj}-\bm{\eta}_{dj}^s)- k_1\cdot(\mathbf{r}_{dj}-\bm{\xi}_{dj}^s)\\
	&+\sigma_{dj}^{\delta cj} \cdot \mathbf{u}_{p}(\mathbf{x}_{dj}^{\delta cj})+\mathbf{u}_{dj}^{col},
	\eaa
	\ee	
	where $\mathbf{r}_{\delta cj}$ and $\mathbf{v}_{\delta cj}$ are the position and the velocity of the $\delta$-agent corresponding to the $\calD_j$ on the boundary of the connectivity region of the attackers. The StringNet is achieved when $\norm{\mathbf{r}_{dj}-\mathbf{\bm{\xi}}_{dj}} \le b_d$ for all $j \in I_d$ during this phase.
	To ensure enough space for the movement of the attackers inside the StringNet, the minimum number of defenders require to herd the given number of attackers with connectivity region of radius $\rho_{ac}$ is: $
	N_d^{min}=\ceil[\Bigg]{\frac{\pi}{\cos^{-1} \left (\frac{\rho_{ac}+b_d}{\rho_{sn}^{max}-b_d}\right)}} $,
	where $\ceil*{\cdot}$ gives the smallest integer greater than or equal to its argument.
	\subsection{Herding: Moving the StringNet to safe area}
	Once the defenders form a StringNet around the attackers, they move while tracking a desired rigid closed formation $\mathscr{F}_d^h$ centered at a virtual agent $\mathbf{r}_{df}$. The virtual agent's dynamics are governed by the DI dynamics similar to \eqref{eq:defendDyn1} with acceleration, 
	\be \label{eq:def_des_control3}	
	\mathbf{u}_{df}=\bm{\sigma}_{d_h} \biggl ( - k_1(\mathbf{r}_{df}-\mathbf{r}_{s})  +\displaystyle \sum_{k \in  \calN_{df}^o}^{} \sigma_{df}^{\delta fk}  \mathbf{u}_{p}(\mathbf{x}_{df}^{\delta fk}) \biggr),
	\ee
	where $\delta {fk}$ refers to the $\delta$-agent on the obstacle $\calO_k$ corresponding to virtual agent at $\mathbf{r}_{df}$, and $\bm{\sigma}_{d_h}(\mathbf{u})=\min(u_{md}^h,\norm{\mathbf{u}})\frac{\mathbf{u}}{\norm{\mathbf{u}}}$. We choose $u_{md}^h < u_{ma}$ to ensure that the attackers are able to react to the motion of the defenders. The desired positions $\bm{\xi}_{dj}^h$ of the defenders on the desired closed formation $\mathscr{F}_d^h$ satisfy:
	\be \label{eq:desired_formation_herding}
	\arraycolsep=1.4pt
	\baa
	\dot{\bm{\xi}}_{dj}^h=\bm{\eta}_{dj}^h, \quad \quad \dot{\bm{\eta}}_{dj}^h=\mathbf{u}_{df}-C_d \mathbf{v}_{df};\\
	\bm{\xi}_{dj}^h=\mathbf{r}_{df} + \rho_{sn} \hat{\mathbf{o}}(\theta_{dj}) \text{, where }
	\theta_{dj} = \theta_{df}^{s*}+\frac{\pi(2j-1)}{N_d}.
	\eaa
	\ee
	The control~\eqref{eq:def_control1} is appropriately modified to track $(\bm{\xi}_{dj}^h,$ $\bm{\eta}_{dj}^h)$ by replacing $\bm{\xi}_{dj}^s,\bm{\eta}_{dj}^s,\dot{\bm{\eta}}_{dj}^s$ by $\bm{\xi}_{dj}^h,\bm{\eta}_{dj}^h,\dot{\bm{\eta}}_{dj}^h$, respectively.
	\subsection{Convergence Analysis}
	\begin{theorem} \label{thm:stringNet_formation}
		The StringNet $\calG^s$ centered at $\mathbf{r}_{ac}$ is formed around the attackers in finite time from almost all initial conditions under the control action in~\eqref{eq:def_control1} (gathering phase) and \eqref{eq:def_control2} (StringNet formation phase), while avoiding collisions.
	\end{theorem}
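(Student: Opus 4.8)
The plan is to treat the gathering phase and the StringNet‑formation phase separately, building in each phase an energy/Lyapunov certificate that simultaneously keeps every inter‑agent and agent–obstacle distance bounded away from its safety floor and drives the defenders to the prescribed formation, and then to concatenate the two finite‑time pieces.

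\emph{Gathering phase.} Since the term $C_d\mathbf{v}_{dj}$ in \eqref{eq:def_control1} cancels the drag and $\bm{\eta}_{dj}^g=\dot{\bm{\eta}}_{dj}^g=\mathbf{0}$, the closed loop is $\ddot{\mathbf{r}}_{dj}=-k_2\,\textbf{sig}^{\alpha_2}(\mathbf{v}_{dj})-k_1\,\textbf{sig}^{\alpha_1}(\mathbf{r}_{dj}-\bm{\xi}_{dj}^g)+\mathbf{u}_{dj}^{col}$. I would use the function
\[
W_g=\sum_{j\in I_d}\left[\tfrac12\norm{\mathbf{v}_{dj}}^2+\tfrac{k_1}{\alpha_1+1}\norm{\mathbf{r}_{dj}-\bm{\xi}_{dj}^g}^{\alpha_1+1}\right]+\tfrac12\sum_{j}\sum_{j'\in\calN_{dj}^d}\mu_{dj}^{dj'}V_{dj}^{dj'}+\sum_{j}\sum_{k\in\calN_{dj}^o}\mu_{dj}^{\delta jk}V_{dj}^{\delta jk},
\]
noting that $\nabla_{\mathbf{x}}\big(\norm{\mathbf{x}}^{\alpha_1+1}/(\alpha_1+1)\big)=\textbf{sig}^{\alpha_1}(\mathbf{x})$. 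Differentiating $W_g$ along trajectories, the attractive gradient cancels the $k_1$‑term from the position energy, the potential gradients inside $\mathbf{u}_{dj}^{col}$ cancel the $\dot V$ terms (using $V_{dj}^{dj'}=V_{dj'}^{dj}$ and the fact that the $\delta$‑agent velocity is the tangential projection of $\mathbf{v}_{dj}$ onto $\calSE_{ok}$, so that $\nabla_{\mathbf{r}_{dj}}V_{dj}^{\delta jk}\!\cdot\!\mathbf{v}_{dj}=\dot V_{dj}^{\delta jk}$), and one is left with $\dot W_g=-\sum_j k_2\,\mathbf{v}_{dj}^{T}\textbf{sig}^{\alpha_2}(\mathbf{v}_{dj})-\sum_{j}\sum_{j'}\sigma_{dj}^{dj'}\zeta_{dj}^{dj'}\norm{\mathbf{v}_{dj}-\mathbf{v}_{dj'}}^2-\dots\le 0$. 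Monotonicity of $W_g$ bounds every $V_{dj}^{dj'}$ and $V_{dj}^{\delta jk}$, which by \eqref{eq:potential_function1} forces $R_{dj}^{dj'}>\hat{R}_{dj}^{dj'}$ and keeps $E_{ok}^{dj}$ above its floor for all $t$, i.e. no collisions. Boundedness of $W_g$ and LaSalle's invariance principle give convergence to the equilibrium set $\{\mathbf{v}_{dj}=\mathbf{0},\ k_1\textbf{sig}^{\alpha_1}(\mathbf{r}_{dj}-\bm{\xi}_{dj}^g)=\mathbf{u}_{dj}^{col}|_{\mathbf{v}=0}\}$; by Assumption \ref{assum:gathering_phase_params}(a)–(b) the desired configuration $\mathbf{r}_{dj}=\bm{\xi}_{dj}^g$ is a collision‑free equilibrium at which all blending functions vanish, and the remaining spurious equilibria (where a repulsive gradient balances the attraction) are isolated saddles whose stable manifolds have zero Lebesgue measure — this is exactly the origin of the "almost all initial conditions" caveat. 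Off that measure‑zero set the trajectory enters a neighborhood of $\bm{\xi}_{dj}^g$ on which $\mathbf{u}_{dj}^{col}\equiv\mathbf{0}$, where the dynamics reduce to the homogeneous finite‑time double‑integrator stabilizer; with a suitable exponent pair (e.g. $\alpha_2=2\alpha_1/(1+\alpha_1)$) its negative homogeneity degree yields finite‑time convergence to $\bm{\xi}_{dj}^g$. The intervening "wait" until $\norm{\mathbf{r}_{df}^g-\mathbf{r}_{ac}}<\epsilon$ terminates in finite time because $\mathbf{r}_{df}^g$ lies on the attackers' expected shortest path to $\calP$, so the ACoM reaches its $\epsilon$‑neighborhood in finite time under the flocking controller of Problem 1.

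\emph{StringNet formation phase.} Here \eqref{eq:def_control2} is a linear PD tracker of the moving target $\bm{\xi}_{dj}^s=\mathbf{r}_{ac}+\rho_{sn}\hat{\mathbf{o}}(\theta_{dj})$ plus the same collision terms and the connectivity‑region avoidance $\sigma_{dj}^{\delta cj}\mathbf{u}_p$. Writing $\mathbf{e}_{dj}=\mathbf{r}_{dj}-\bm{\xi}_{dj}^s$ and using $\dot{\bm{\xi}}_{dj}^s=\mathbf{v}_{ac}$, the error obeys $\ddot{\mathbf{e}}_{dj}=-k_2\dot{\mathbf{e}}_{dj}-k_1\mathbf{e}_{dj}+\mathbf{u}_{dj}^{col}-\mathbf{a}_{ac}$, a Hurwitz system forced by the collision terms and by the bounded disturbance $\mathbf{a}_{ac}=\dot{\mathbf{v}}_{ac}$, with $\norm{\mathbf{a}_{ac}}\le u_{ma}+C_d v_{ma}$ since $\norm{\mathbf{u}_{ai}}<u_{ma}$ and $\norm{\mathbf{v}_{ai}}<v_{ma}$. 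I would re‑run the $W$‑type argument (now with quadratic attractive terms and the extra $\delta c$ potential) to again keep all potentials bounded — hence collision‑free — and to confine the collision/connectivity terms to a bounded transient that vanishes near the target by the analogues of Assumption \ref{assum:gathering_phase_params}. Away from collisions the error system is ISS, so $\norm{\mathbf{e}_{dj}(t)}\le\beta(\norm{\mathbf{e}_{dj}(0)},t)+\gamma(u_{ma}+C_d v_{ma})$ with $\beta$ decaying exponentially; \emph{defining} $b_d$ as this ultimate bound (chosen strictly above $\gamma(\cdot)$) there is a finite $T$ with $\norm{\mathbf{e}_{dj}(t)}\le b_d$ for all $j$ and $t\ge T$, at which instant $\calD_1$ and $\calD_{N_d}$ sit within $b_d$ of adjacent slots and the closing string is established, completing $\calG^s$. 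Since the gathering phase ends and the wait ends in finite time, $\calG^s$ is formed in finite time overall.

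\emph{Main obstacle.} The hard part is the first phase: verifying $\dot W_g\le 0$ rigorously when the $\delta$‑agents are state‑dependent projections onto the superelliptic contours, so that $\mathbf{r}_{\delta jk}$ and therefore $\nabla_{\mathbf{r}_{dj}}V_{dj}^{\delta jk}$ depend on $\mathbf{r}_{dj}$ in a non‑trivial way and the "gradient" is not a clean potential gradient — this needs the tangential‑projection identity for $\mathbf{v}_{\delta jk}$ to make the $\dot V^{\delta jk}$ cancellation work — together with ruling out entrapment at spurious equilibria so that "almost all initial conditions" is an actual theorem rather than a plausibility statement; a linearization/hyperbolicity argument on those equilibria, plus invariance of the zero‑measure stable manifolds, is the crux. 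A secondary difficulty is stating "finite time" precisely in the second phase, where the target moves and only convergence to a ball holds; this is resolved by taking $b_d$ to be the ISS ultimate bound and exploiting the exponential decay of the transient.
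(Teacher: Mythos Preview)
Your two–phase decomposition and your treatment of the StringNet–formation phase match the paper closely: the paper also writes the error as a Hurwitz linear system perturbed by $\dot{\bm{\eta}}_{dj}^s=\dot{\mathbf{v}}_{ac}$, invokes Lemma~9.2 of Khalil to obtain an ultimate bound reached in finite time $T_{dj}$, and \emph{defines} $b_d=\max_j b_{dj}$ from that bound. Where you genuinely diverge is the gathering phase and collision avoidance. The paper builds no global energy function; it argues heuristically that $\partial V_{dj}^{dj'}/\partial R\to\infty$ near $\hat R_{dj}^{dj'}$ produces ``infinite acceleration away'' and hence no collision, then simply restricts to the conflict–free regime ($\sigma_{dj}^{dj'}=\sigma_{dj}^{\delta jk}=0$) and cites Bhat--Bernstein finite–time stability with $\alpha_1=\alpha_2/(2-\alpha_2)$ (equivalent to your relation). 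The ``almost all initial conditions'' caveat is not proved but relegated to a footnote describing the bad set $\calM_0$ and leaving its characterization open. Your LaSalle–plus–saddle–manifold program is therefore more ambitious: it tries to \emph{earn} both the collision–free invariance and the measure–zero exclusion that the paper simply asserts. The paper's route is much shorter; yours, if completed, would actually close those gaps.

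One concrete technical gap in your $W_g$ construction: the collision terms in $\mathbf{u}_{dj}^{col}$ carry the blending factors $\sigma_{dj}^{dj'}$ and $\sigma_{dj}^{\delta jk}$, but the stored potentials $\mu_{dj}^{dj'}V_{dj}^{dj'}$ in your $W_g$ do not. Hence the kinetic term contributes $-\sigma\mu\,\mathbf{v}_{dj}^{T}\nabla V$ while the potential term contributes $+\mu\,\mathbf{v}_{dj}^{T}\nabla V$, and in the transition region $0<\sigma<1$ these do not cancel; your claimed $\dot W_g\le 0$ fails there. The standard fix is to store, instead of $\mu V$, a radial primitive $\tilde V(R)=\int_{\infty}^{R}\sigma(r)\,V'(r)\,dr$ so that $\nabla\tilde V=\sigma\,\nabla V$ exactly; this also makes $\tilde V$ identically zero outside the active annulus and removes the neighbor–set discontinuities. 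This is over and above the $\delta$–agent projection subtlety you already flag, and is the point that needs repair before your energy argument goes through.
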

	\begin{proof}
		For almost all initial conditions\footnote{Except for those in the set $\mathcal{M}_0=\{\mathbf{r}_{dj}, \mathbf{v}_{dj} \in \bR^2$ $ \; \forall j \in I_d | \mathbf{v}_{dj} = \mathbf{0}, \mathbf{u}_{dj}  = \mathbf{0} \text{ as per \eqref{eq:def_control1}}, \eqref{eq:def_control2} \}$, and the initial conditions from which the defenders' trajectories approach $\calM_0$; the latter depends on the desired states. A formal characterization of this set is left open for future research.} such that $R_{dj}^{dj'} >\hat{R}_{dj}^{dj'}$, we have $\frac{\partial V_{dj}^{dj'}}{\partial R_{dj}^{dj'}} \rightarrow \infty $ as $R_{dj}^{dj'} \rightarrow \hat{R}_{dj}^{dj'}$ implying infinite acceleration applied on $\calD_j$ in the direction away from $\calD_j'$ which ensures $R_{dj}^{dj'} >\hat{R}_{dj}^{dj'}$ at all times and hence ensures no collision among the defenders. A similar argument can be used to show obstacle avoidance. 

		During the gathering phase, when the defenders are not in conflict with other defenders or obstacle (i.e., $\sigma_{dj}^{dj'}=\sigma_{dj}^{\delta jk}=0$, $\forall j, j' \in I_d; k \in I_o$), the dynamics read:		
		\be \label{eq:def_control4}
		\baa
		\dot{\mathbf{r}}_{dj}=\mathbf{v}_{dj}\\
		\dot{\mathbf{v}}_{dj}= - k_2 \mathbf{sig}^{\alpha_{2}}(\mathbf{v}_{dj}-\bm{\eta}_{dj}^g)- k_{1}\mathbf{sig}^{\alpha_{1}}(\mathbf{r}_{dj}-\bm{\xi}_{dj}^g)
		\eaa
		\ee
		The origin $\mathbf{r}_{dj}-\bm{\xi}_{dj}^g= \mathbf{v}_{dj}=\mathbf{0}$ of \eqref{eq:def_control4} is finite-time stable \cite{bhat2005geometric} if $\alpha_1=\frac{\alpha_2}{2-\alpha_2}$. Let the convergence time be $T_{d}^{g}$.
		
		Similarly during the StringNet formation phase, when $\calD_j$ is not in conflict with any other defenders or obstacle, the error dynamics read:
		\be \label{eq:def_err_dyn1}
		\arraycolsep=0pt
		\dot{\mathbf{e}}_{dj}=
		\bbmat
		\dot{\mathbf{e}}_{dj}^r \\
		\dot{\mathbf{e}}_{dj}^v
		\ebmat= 
		\begingroup
		\setlength\arraycolsep{2pt}
		\bbmat 0 & 1\\ -k_1 & -k_2\ebmat
		\endgroup
		\bbmat 
		\mathbf{e}_{dj}^r \\\mathbf{e}_{dj}^v 
		\ebmat
		+ \bbmat
		0 \\
		\dot{\bm{\eta}}_{dj}^s
		\ebmat
		=\mathbf{A}\mathbf{e}_{dj} + \mathbf{g}_{dj}
		\ee
		where $\mathbf{e}_{dj}^r=\mathbf{r}_{dj}-\bm{\xi}_{dj}^s$, $\mathbf{e}_{dj}^v=\mathbf{v}_{dj}-\bm{\eta}_{dj}^s$, and $\norm{\dot{\bm{\eta}}_{dj}^s}=\norm{\dot{\mathbf{v}}_{ac}} \le u_{ma}$ which implies the disturbance term $\mathbf{g}_{dj}$ is bounded: $\norm{\mathbf{g}_{dj}}\le u_{ma}$. The nominal system in~\eqref{eq:def_err_dyn1}, $\dot{\mathbf{e}}_{dj}=\mathbf{A}\mathbf{e}_{dj}$, is exponentially stable for $k_1$, $k_2 >0$. From Theorem 4.6 in \cite{khalil2015nonlinear},
		there exists a positive definite matrix $\mathbf{P}$ that satisfies the Lyapunov equation $\mathbf{A}^T\mathbf{P}+\mathbf{P}\mathbf{A}=-\mathbf{Q}$, for any given positive definite matrix $\mathbf{Q}$. The Lyapunov function  $V_{dj}=\mathbf{e}_{dj}^T \mathbf{P} \mathbf{e}_{dj}$ satisfies the conditions as required in Lemma 9.2 in \cite{khalil2015nonlinear} with constants $c_1, c_2, c_3, c_4$ given in terms of the eigenvalues of $\mathbf{P}$ and $\mathbf{Q}$ as:
		$c_1=\lambda_{min}(\mathbf{P})$, $c_2=\lambda_{max}(\mathbf{P})$, $c_3=\lambda_{min}(\mathbf{Q})$ and $c_4=2 \lambda_{max}(\mathbf{P})$. 
		From Lemma 9.2 in \cite{khalil2015nonlinear}, if $\norm{\mathbf{g}_{dj}}\le u_{ma} < \frac{c_3}{c_4}\sqrt{\frac{c_1}{c_2}}c_0\bar{e}$ for all $t>0$, all $\mathbf{e}_{dj} \in D=\{\mathbf{e}_{dj} \in \bR^4| \norm{\mathbf{e}_{dj}} < \bar{e}\}$ with $c_0<1$, then for all $\norm{\mathbf{e}_{dj}(0)} <\sqrt{\frac{c_1}{c_2}}\bar{e}$, the solution $\mathbf{e}_{dj}(t)$ of the perturbed system in~\eqref{eq:def_err_dyn1} satisfies:
		\bi
		\item[1)] $\frac{\norm{\mathbf{e}_{dj}(t)}}{\norm{\mathbf{e}_{dj}(t_0)}} \le \sqrt{\frac{c_2}{c_1}} e^{\left(-\frac{(1-c_0)c_3}{2c_2}(t-t_0)\right)}$, $ \forall t_0 \le t <t_0+T_{dj}$,
		\item[2)] $\norm{\mathbf{e}_{dj}(t)} \le b_{dj} = \frac{c_4}{c_3}\sqrt{\frac{c_2}{c_1}}\frac{u_{ma}}{c_0}$, $\forall t \ge t_0+T_{dj}$,
		\ei
		for some finite time $T_{dj}$. That is, $\calD_j$ tracks the desired trajectory ($\bm{\xi}_{dj}^s$, $\bm{\eta}_{dj}^s$) within the error bound $b_{dj}$. Denote $b_d=\max_{j \in I_d} {b_{dj}}$. After the first two phases, all the defenders reach their desired locations within $b_{d}$ neighborhood in finite time $T \ge T_{d}^{g}+ \max_{j \in I_d} T_{dj}$ and hence the StringNet is formed in finite time.
	\end{proof}
	\begin{remark}
		All the attackers get entrapped inside the StringNet if the defenders form $\mathscr{F}_d^g$ before the attackers reach within $\rho_{df}^g+\rho_{sn}$ distance from $\calP$.
	\end{remark}
	\begin{theorem}
		Once the defenders form the StringNet $\calG^s$, they herd all the attackers trapped inside $\calG^s$ to the safe area $\calS \; (\rho_s>\rho_{sn}^{max})$ while avoiding the obstacles by tracking desired positions governed by \eqref{eq:def_des_control3} under the appropriately modified control action in \eqref{eq:def_control1}.
	\end{theorem}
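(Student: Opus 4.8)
The plan is to mirror the structure of the proof of Theorem~\ref{thm:stringNet_formation}, since the herding phase uses the same finite-time-plus-disturbance-rejection tracking controller, now applied to the moving formation $\mathscr{F}_d^h$ centered at the virtual agent $\mathbf{r}_{df}$. First I would establish collision and obstacle avoidance exactly as before: the potential functions $V_{dj}^{dj'}$, $V_{dj}^{\delta jk}$ blow up as the corresponding distances approach their safety values $\hat{R}_{dj}^{dj'}$, producing unbounded repulsive acceleration, so $R_{dj}^{dj'} > \hat{R}_{dj}^{dj'}$ is forward invariant and the defenders never collide with each other or with the obstacles; the same argument covers the virtual agent $\mathbf{r}_{df}$ avoiding $\calO_k$ through the term $\sigma_{df}^{\delta fk}\mathbf{u}_{p}(\mathbf{x}_{df}^{\delta fk})$ in \eqref{eq:def_des_control3}.

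Next I would analyze the virtual agent itself. Away from obstacles ($\calN_{df}^o=\emptyset$), \eqref{eq:def_des_control3} reduces to $\dot{\mathbf{v}}_{df}=-k_1(\mathbf{r}_{df}-\mathbf{r}_s)-C_d\mathbf{v}_{df}$ after cancelling the drag via the saturation (valid once the transient shrinks below $u_{md}^h$), which is a stable linear system driving $\mathbf{r}_{df}\to\mathbf{r}_s$; the obstacle-avoidance term only activates in a bounded region and, by the avoidance argument above, does not trap $\mathbf{r}_{df}$ at a spurious equilibrium for almost all initial conditions, so $\mathbf{r}_{df}\to\mathbf{r}_s$. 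Crucially $\mathbf{u}_{df}$ is bounded by $u_{md}^h$, hence $\dot{\bm{\eta}}_{dj}^h=\mathbf{u}_{df}-C_d\mathbf{v}_{df}$ is bounded, so the perturbation to each defender's error dynamics is bounded. Then I would repeat the Lemma~9.2 (Khalil) argument verbatim: with $\mathbf{e}_{dj}^r=\mathbf{r}_{dj}-\bm{\xi}_{dj}^h$, $\mathbf{e}_{dj}^v=\mathbf{v}_{dj}-\bm{\eta}_{dj}^h$, the nominal dynamics $\dot{\mathbf{e}}_{dj}=\mathbf{A}\mathbf{e}_{dj}$ are exponentially stable, the bounded disturbance $\mathbf{g}_{dj}$ satisfies the smallness condition (now with bound $u_{md}^h$ in place of $u_{ma}$, which is even more favorable since $u_{md}^h<u_{ma}$), giving ultimate boundedness of $\norm{\mathbf{e}_{dj}}$ by some $b_d'\le b_d$ in finite time. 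Since $\rho_{sn}$ was chosen with margin $b_d$ and $\rho_{sn}^{max}$ is the largest formation footprint that fits the free space, the whole StringNet stays intact (no string stretched beyond feasibility, no defender hitting an obstacle) throughout, and as $\mathbf{r}_{df}\to\mathbf{r}_s$ the entire formation, hence all entrapped attackers, enters $\calS$ (using $\rho_s>\rho_{sn}^{max}$).

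The main obstacle is making the virtual agent's convergence $\mathbf{r}_{df}\to\mathbf{r}_s$ rigorous in the presence of the obstacle-avoidance term and the saturation $\bm{\sigma}_{d_h}$: strictly speaking this is a navigation-function-type claim and may only hold ``from almost all initial conditions,'' and one must argue the saturation is eventually inactive so the drag cancellation is exact. I would handle this by invoking the same ``almost all initial conditions'' caveat already used in Theorem~\ref{thm:stringNet_formation} (excluding the measure-zero stable manifolds of spurious critical points of the obstacle potentials), noting the environment is assumed to admit an obstacle-free formation path of footprint $\rho_{sn}^{max}$, and citing \cite{khalil2015nonlinear} for the ISS/ultimate-boundedness step. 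A secondary subtlety is that $\bm{\eta}_{dj}^h$ and $\dot{\bm{\eta}}_{dj}^h$ must be well-defined and bounded even when $\mathbf{u}_{df}$ switches due to the blending function $\sigma_{df}^{\delta fk}$; since that blending is $\calC^1$ by construction \eqref{eq:blendFun}, $\mathbf{u}_{df}$ is continuous and bounded, so $\dot{\bm{\eta}}_{dj}^h$ is a legitimate bounded perturbation and the argument goes through.
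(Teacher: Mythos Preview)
Your overall decomposition matches the paper's: (i) safety via the barrier potentials, (ii) convergence of the virtual agent $\mathbf{r}_{df}\to\mathbf{r}_s$, (iii) tracking of $(\bm{\xi}_{dj}^h,\bm{\eta}_{dj}^h)$ by invoking the Theorem~\ref{thm:stringNet_formation} machinery. The difference is in how step~(ii) is actually carried out, and this is precisely the point you flag as ``the main obstacle'' but do not resolve.

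The paper does \emph{not} argue that the saturation is eventually inactive and then appeal to the linear system. Instead it treats the saturated dynamics $\dot{\mathbf{v}}_{df}=\bm{\sigma}_{d_h}(-k_1\bar{\mathbf{r}}_{df})-C_d\mathbf{v}_{df}$ directly with a single piecewise Lyapunov function,
\[
V=\begin{cases}
\tfrac{k_1}{2}\norm{\bar{\mathbf{r}}_{df}}^2+\tfrac{1}{2}\norm{\mathbf{v}_{df}}^2, & \norm{\bar{\mathbf{r}}_{df}}<\tfrac{u_{md}^h}{k_1},\\[2pt]
u_{md}^h\norm{\bar{\mathbf{r}}_{df}}+\tfrac{1}{2}\norm{\mathbf{v}_{df}}^2-\tfrac{(u_{md}^h)^2}{2k_1}, & \text{otherwise},
\end{cases}
\]
which is continuous, positive definite, and gives $\dot V=-C_d\norm{\mathbf{v}_{df}}^2$ in \emph{both} regimes; LaSalle's invariance principle then yields $\bar{\mathbf{r}}_{df},\mathbf{v}_{df}\to\mathbf{0}$ globally. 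Your route (``wait until the transient shrinks below $u_{md}^h$'') is circular: you need some Lyapunov-type argument valid in the saturated regime to guarantee that shrinkage in the first place, and the drag term $-C_d\mathbf{v}_{df}$ alone does not bound $\norm{\bar{\mathbf{r}}_{df}}$. The piecewise $V$ above is exactly the missing ingredient.

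There is a second, related gap. You reuse the ``potential blows up $\Rightarrow$ infinite repulsive acceleration'' argument for the virtual agent's obstacle avoidance, but $\mathbf{u}_{df}$ is saturated by $\bm{\sigma}_{d_h}$, so no infinite acceleration is ever applied and that argument fails verbatim. The paper handles this by enlarging the safety radius: it requires $\hat{R}_{df}^{\delta ok}>\rho_{sn}+\rho_d+\bar{\rho}$ with $\bar{\rho}=\tfrac{u_{md}^h(1-\log 2)}{C_d^2}$, the worst-case braking distance of the formation under bounded acceleration and drag, so that even a saturated repulsion suffices to keep the formation off the obstacle. Your proposal does not account for this margin.
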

	\begin{proof}
		Since the desired formation $\mathscr{F}_d^h$ moves as a rigid formation, we only consider the virtual agent at $\mathbf{r}_{df}$ with size $\rho_{sn}+\rho_d$ whose dynamics are:
		\be \label{eq:desired_formation_dyn}
		\dot{\bar{\mathbf{r}}}_{df} = \mathbf{v}_{df}, \quad \quad
		\dot{\mathbf{v}}_{df} = \bm{\sigma}_{d_h}(\mathbf{u}_{df}^h)-C_d\mathbf{v}_{df},
		\ee 
		where $\mathbf{u}_{df}^h= - k_1\bar{\mathbf{r}}_{df}  +\displaystyle \sum_{k \in  \calN_{df}^o}^{} \sigma_{df}^{\delta fk}  \mathbf{u}_{p}(\mathbf{x}_{df}^{\delta fk})$ and $\bar{\mathbf{r}}_{df}=\mathbf{r}_{df}-\mathbf{r}_{s}$. Using similar arguments as in Theorem \ref{thm:stringNet_formation}, we can ensure the safety of $\mathscr{F}_d^h$ if $\hat{R}_{df}^{\delta ok}>\rho_{sn}+\rho_{d}+\bar{\rho}$, where $\bar{\rho}=\frac{u_{md}^h(1-\log(2))}{C_d^2}$ is the maximum distance the formation can travel in the worst case motion of the formation toward the obstacle with the bounded acceleration. The formation will leave the locally active potential fields around the static obstacle in some finite time. In the absence of any obstacle's local potential field, we have $\mathbf{u}_{df}^h= - k_1\bar{\mathbf{r}}_{df}$. 
		We define a candidate Lyapunov function:
		\be
		\arraycolsep=0pt
		V=\begin{cases}
			\frac{k_1 \norm{\bar{\mathbf{r}}_{df}}^2}{2}+\frac{\norm{\mathbf{v}_{df}}^2}{2}, & \text{ if } \norm{\bar{\mathbf{r}}_{df}}<\frac{u_{md}^h}{k_1},\\
			u_{md}^h\norm{\bar{\mathbf{r}}_{df}}+\frac{\norm{\mathbf{v}_{df}}^2}{2} - \frac{(u_{md}^h)^2}{2k_1}, & \text{ otherwise}.
		\end{cases}
		\ee
		$V$ is 0 at $\bar{\mathbf{r}}_{df}=\mathbf{v}_{df}=\mathbf{0}$, is positive definite, continuous and its time derivative along the trajectories of~\eqref{eq:desired_formation_dyn} is:
		\be
		\dot{V}=\begin{cases}
			-C_d\norm{\mathbf{v}_{df}}^2 & \text{ if } \norm{\bar{\mathbf{r}}_{df}}<\frac{u_{md}^h}{k_1},\\
			-C_d\norm{\mathbf{v}_{df}}^2 & \text{ otherwise}.
		\end{cases}
		\ee
		$\dot{V}$ is negative semi-definite and we have from the dynamics \eqref{eq:desired_formation_dyn} that the largest invariant subset in $\calQ=\{\bar{\mathbf{r}}_{df},\mathbf{v}_{df} \in \bR^2| \dot{V}=0\}$ is the origin $\bar{\mathbf{r}}_{df}=\mathbf{v}_{df}=\mathbf{0}$. Using Lasalle's Invariance Principle (Theorem 4.4 in \cite{khalil2015nonlinear}), the trajectories of the system \eqref{eq:desired_formation_dyn} converge to $\bar{\mathbf{r}}_{df}=\mathbf{v}_{df}=\mathbf{0}$, i.e, the center $\mathbf{r}_{df}$ converges to $\mathbf{r}_s$ and so does the desired formation $\mathscr{F}_d^h$.
		From Theorem \ref{thm:stringNet_formation}, the defenders track these desired trajectories under appropriately modified $\eqref{eq:def_control1}$ and hence herd the attackers to $\calS$.		
	\end{proof}
	\section{Simulations}\label{sec:simulations} 
	%
	We provide a simulation of 5 defenders herding an adversarial swarm of 4 attackers to $\calS$ with saturated control inputs whose theoretical analysis is currently an ongoing work. The trajectories of all the agents are shown in Fig.~\ref{fig:herdMultiAtt}. As observed, starting from the given initial conditions, the defenders are able to gather before the attackers reach close to $\calP$, form the StringNet around the attackers and herd them to $\calS$. The safety is assessed in terms of critical distance ratios:
	\bewn
	\arraycolsep=1.4pt
	\baa
	\Delta_{d}^{d} =&  \displaystyle \max_{j\neq j' \in I_d} \frac{\hat{R}_{dj}^{dj'}}{R_{dj'}^{dj}}, \Delta_{a}^{d} =\displaystyle \max_{i \in I_a,j \in I_d} \frac{\hat{R}_{dj}^{ai}}{R_{dj}^{ai}},\Delta_{a}^{a} = \displaystyle \max_{i\neq i' \in I_a} \frac{\hat{R}_{ai}^{ai'}}{R_{ai}^{ai'}}
	\eaa
	\eewn
	
	\bewn 
	\Delta_{a}^{o} = \displaystyle \max_{i \in I_a} \max_{k \in \calN_{ai}^o}\frac{\xi_{ok}^{m}}{ E_{ok}^{ai}} , \quad 
	\Delta_{d}^{o} = \displaystyle \max_{j \in I_d} \max_{k \in \calN_{dj}^o} \frac{\xi_{ok}^{m}}{ E_{ok}^{dj}},
	\eewn
	where $E_{ok}^{ai}, E_{ok}^{dj}$ are super-elliptic distances from $\calO_k$ defined as per expression in~\eqref{eq:superEllipticDist}.
	These ratios have to be less than 1 for no collisions. As observed from Fig.~\ref{fig:criticalRelDistances} all these ratios are less than 1 for all times ensuring no collisions.
	\vspace{-5pt}
	\begin{figure}[h]
		\centering
		\vspace{-10pt}
		\setlength{\abovecaptionskip}{2pt plus 3pt minus 2pt}
		\setlength{\belowcaptionskip}{-20pt plus 3pt minus 2pt}	
		\includegraphics[width=.95\linewidth,trim={7cm 1cm 7cm 1cm},clip]{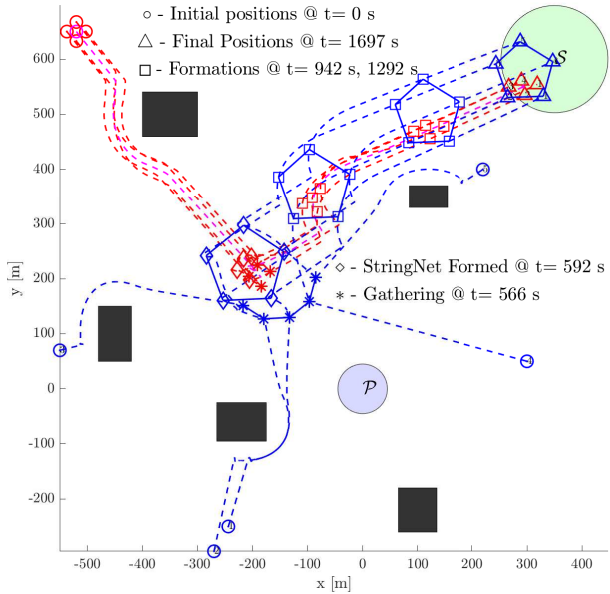}
		\caption{The herding paths.}
		\label{fig:herdMultiAtt}
	\end{figure}	
	\begin{figure}[h]
		\centering
		\setlength{\abovecaptionskip}{2pt plus 3pt minus 2pt}
		\setlength{\belowcaptionskip}{-10pt plus 3pt minus 2pt}
		\includegraphics[width=.85\linewidth,trim={.6cm .1cm 1cm .15cm},clip]{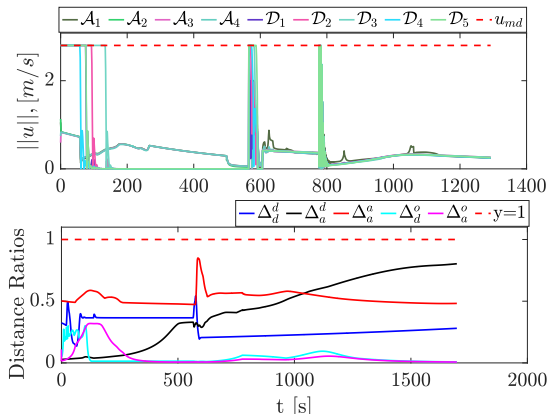}
		\caption{Inputs and critical distances.}
		\label{fig:criticalRelDistances}
	\end{figure}	
	
	\vspace{-1mm}
	\section{Conclusions and Ongoing Work} \label{sec:conclusions}
	We proposed a herding method for defending a protected area against an adversarial swarm. A closed formation is formed by the defenders around the attackers, restricts their motion and herds them to the safe area while avoiding the static rectangular obstacles.
	We provided formal analysis for the proposed approach and simulations with saturated control actions whose theoretical analysis and experimental investigation is a part of an ongoing work.  
	


	
	
	\vspace{-2mm}
	\bibliographystyle{IEEEtran}
	\bibliography{CDC2019_Refs}
\end{document}